\newtheorem{remark}{Remark}
\newtheorem{proposition}{Proposition}
\begin{document}

\title{Unitary gate synthesis via polynomial optimization}

\author{Llorenç Balada Gaggioli \orcidlink{0009-0006-3930-3457}}
\email{llorenc.balada.gaggioli@fel.cvut.cz}
\affiliation{%
Department of Computer Science, 
Czech Technical University in Prague, Prague 2, Czech Republic}
\affiliation{%
LAAS-CNRS, Université de Toulouse, France
}%

\author{Denys I. Bondar\orcidlink{0000-0002-3626-4804}}
 \email{dbondar@tulane.edu}
\affiliation{%
 Department of Physics and Engineering Physics, Tulane University,  New Orleans, LA 70118, USA
}

\author{Jiri Vala\orcidlink{0000-0001-7795-8602}}
\email{jiri.vala@mu.ie}
\affiliation{Department of Physics, Maynooth University, Maynooth, Ireland}
\affiliation{School of Theoretical Physics, Dublin Institute for Advanced Studies, Dublin, Ireland}
\affiliation{Tyndall National Institute, Cork, Ireland}

\author{Roman Ovsiannikov}
\email{roman.ovsiannikov@kipt.kharkiv.ua}
\affiliation{NSC Kharkiv Institute of Physics and Technology, 61108 Kharkiv, Ukraine}

\author{Jakub Marecek\orcidlink{0000-0003-0839-0691}}
\email{jakub.marecek@fel.cvut.cz}
\affiliation{%
Department of Computer Science, 
Czech Technical University in Prague, Prague 2, Czech Republic}

\date{\today}

\begin{abstract}

Quantum optimal control plays a crucial role in the development of quantum technologies, particularly in the design and implementation of fast and accurate gates for quantum computing. Here, we present a method to  synthesize gates using the Magnus expansion. In particular, we formulate a polynomial optimization problem that allows us to find the global solution without resorting to approximations of the exponential. The global method we use provides a certificate of globality and lets us do single-shot optimization, which implies it is generally faster than local methods.
 By optimizing over Hermitian matrices generating the unitaries, instead of the unitaries themselves, we can reduce the size of the polynomial to optimize, leading to fast convergence and scalability.
 Numerical experiments comparing our results with CRAB and GRAPE show that we maintain high accuracy while providing globality certificates. 
\end{abstract} 

\maketitle

\section*{Introduction}

The steering of quantum dynamical systems toward desired target states and operators is one of the principal concerns of quantum optimal control \cite{Koch_2022,Morzhin_2019,Zhang_2017,Glaser_2015,Dong_2010,Brif_2010,DAlessandro2007-ht}. Among its applications, a crucial one is the synthesis of unitary gates, which form the computational gates in quantum computing circuits \cite{Nielsen_Chuang_2010}. Implementing unitary operations with high fidelity, minimal time, and constrained \cite{PhysRevLett.89.188301,Nielsen_2006,roque2020engineeringfasthighfidelityquantum} by hardware specifications is fundamental for the development of quantum technologies such as  quantum computing, quantum simulations, and quantum sensing \cite{Ac_n_2018}. However, achieving high precision and robustness is still an obstacle in the improvement of quantum computing in the Noisy Intermediate-Scale Quantum (NISQ) era \cite{Preskill2018quantumcomputingin}.

Optimizing these unitary gates can be done in different ways. Many currently used methods are local solvers, which can result in very good fidelities. However, we are not assured to have found the best possible solution, the global optimum, as they can get stuck in local optima. The use of global optimization methods allows us to find the global solution in a single run, instead of using many random initializations, as local solvers do. This leads not only to a generally faster way of extracting the solution, but also to a more accurate one. This is crucial to the implementation of quantum technologies, for example, in quantum circuits formed by a large number of unitaries, where the slight reduction of the error for each gate can lead to a significant improvement of the total accuracy.

Motivated by the desire to obtain global solutions to constrained quantum optimal control problems, in \cite{bondar2025globallyoptimalquantumcontrol} QCPOP is presented, where quantum optimal control problems are reformulated as polynomial optimization problems. This allows us to find the global solution to the problem using a hierarchy of semidefinite programs \cite{lasserre}. QCPOP is based on solving the Schrödinger equation approximately via the Magnus expansion \cite{Magnus,marecek2020quantumoptimalcontrolmagnus}, to handle the time dependence of the control Hamiltonian, and Chebyshev polynomials \cite{10.1063/1.448136} to approximate the exponential function.

Here we consider the unitary gate synthesis problem with the goal of formulating a polynomial optimization problem with which we can find the global solution of the problem efficiently and for a wider range of controls. We show that by optimizing the Hermitian matrices that generate the unitaries we are concerned about, instead of the unitaries themselves, we can get equally accurate results for much less runtime to solve the problem. Firstly, there is less computation time required to obtain the polynomial to optimize, and secondly this polynomial is of less degree, and therefore also faster to minimize.

We also consider piecewise constant controls, and present an alternative method, what we call the discrete Magnus expansion, to the generally used to iterative application of time independent unitaries for each time step. The improvement on efficiency of the proposed method is illustrated by comparing its running time to the running time of QCPOP. And, to show how we maintain high accuracy, we compare both the continuous and piecewise constant controls to the established methods GRAPE \cite{GRAPE} and CRAB \cite{CRAB}.

\section*{Gate synthesis}

Consider a finite-dimensional quantum system that evolves following a unitary operator $U(t)$, which satisfies the Schrödinger equation
\begin{equation}
    \frac{\partial}{\partial t} U(t) = -\frac{i}{\hbar}H(t)U(t),
    \label{Schro}
\end{equation}
where $U(0)=I$. We set $\hbar=1$ and we consider the Hamiltonian of the system to be of the following form
\begin{equation}
    H(t)=H_0+E(t)H_c,
\end{equation}
where $H_0$ is the free Hamiltonian, describing the intrinsic dynamics of the system, and $H_c$ is the control Hamiltonian, which regulates the influence of the control function, $E(t)$, on the system.

In most quantum control problems, we want to optimize the shape of the control pulse that will drive the system to the desired state or unitary. Most control pulses have a shape that can be well approximated via a polynomial, so we will take the control function to be
\begin{equation}
    E(t)=\sum_{k=0}^{m-1} x_k t^{k},
\end{equation}
and $\{x_0,x_1,\hdots,x_{m-1}\}$ are the parameters that we want to optimize. We could also use a linear combination of functions of time weighted by the coefficients $\mathbf{x}$, as long as these functions of time are integrable.

To study how the system evolves in time with the control function, we have to solve Equation ~\ref{Schro}. For this, we can use the Magnus series \cite{Magnus}, which lets us write the unitary evolution of the system at some time $T$ as
\begin{align}\label{eq:MagnusApprox}
    U(T) = \lim_{n \rightarrow \infty} \exp [ \Lambda_n ], 
    \qquad \Lambda_n = \sum_{k=1}^{n} \Omega_k . 
\end{align}

If we let $A(t)=-iH(t)$, using $B(s,t) \equiv [A(s), A(t)]$, $C(s,t,u) \equiv [A(s), B(t,u)]$ the first three terms of the Magnus expansion are as follows. 
\begin{align}
    \Omega_1 &= \int_0^T \!\! A(t) \, dt  , \label{eq:Omega1} \;\;\; \Omega_2 = \frac{1}{2} \int_0^T \!\! \int_0^{t} \!\! B(t,s) \, dt  ds , \\
    \Omega_3 &= \frac{1}{3!} \int_0^T \!\! \int_0^{t} \!\! \int_0^{s} \!\!  C(t,s,u)  -  C(u,t,s)  \, dt ds du. \label{eq:Omega3}
\end{align}
The Magnus series converges so long as $ 
    \int _{0}^{T}\| A(t) \|_2 \ dt < \pi$ \cite{Blanes_2009}. 
An important feature of the Magnus expansion is that $\exp [ \Lambda_n ]$ is unitary for every $n$. Also, note that for our system, $H(t)$, the series has a the polynomial nature, making it very compatible with polynomial optimization.

The Magnus expansion provides a systematic way to write the unitary evolution of the system, and by truncating it to the first $n$ terms, namely $\Lambda_n$, we have a good approximation while simultaneously gaining computational efficiency. We note that the time variable will vanish from the integrals, and what will remain is a matrix with polynomial entries in $m$ variables $\{x_0,\hdots,x_{m-1}\}$ and of order $n$. Therefore, we will write the truncated Magnus series as $\Lambda_n(T,\mathbf{x})$.

For some quantum optimal control problems, we are interested in generating a specific unitary operator, such as a gate for a quantum computer. We want to optimize the control pulse to drive the system to the target unitary $U^{\star}$. We can formulate this problem as the minimization of the infidelity between the target unitary and the unitary evolving over time $T$
\begin{equation}
    \min_{\mathbf{x}} \quad 1- \frac{1}{d^2}\text{Tr}( U^{\star \dag}U(T,\mathbf{x})),
    \label{Gate}
\end{equation}
where $U(T)$ satisfies Equation ~\ref{Schro}, and $d$ is the dimension of the Hilbert space. Both unitaries can be written as the exponential of anti-Hermitian matrices, $U(T)=e^{\Lambda}$ and $U^{\star}=e^{\Theta}$. 

\begin{proposition}
    Let $X_U=\arg\min_{\mathbf{x}} D(U(T,\mathbf{x}),U^{\star})$ be the set of global minimisers with global optimum of 0, for some distance $D$, a unitary matrix $U(T,\mathbf{x})=e^{\Lambda}$ solving $\partial_tU(t)=A(t,\mathbf{x})U(t)$, and target unitary $U^{\star}=e^\Theta$, where $\|\Theta\|_2<\pi$; and let $X_H=\arg\min_{\mathbf{x}} D(\Lambda(T,\mathbf{x}),\Theta)$ be the set of global minimisers with global optimum of 0. If $\mathbf{x^{\star}}\in X_H$, then $\mathbf{x^{\star}}\in X_U$; and if $\mathbf{\hat{x}}\in X_U$ satisfies $ \int _{0}^{T}\| A(t,\mathbf{\hat{x}}) \|_2\ dt < \pi$ then $\mathbf{\hat{x}}\in X_H$.
    \label{prop}
\end{proposition}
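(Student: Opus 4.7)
The plan is to exploit two ingredients: that any distance $D$ obeys $D(a,b)=0$ if and only if $a=b$, so the hypothesis that the global optimum equals $0$ turns both minimization problems into exact matching conditions; and that the matrix exponential restricted to anti-Hermitian matrices of operator norm strictly less than $\pi$ is injective (in fact a diffeomorphism onto $\{V\in U(d):-1\notin\mathrm{spec}(V)\}$, with the principal logarithm as its inverse). Under $U(T,\mathbf{x})=e^{\Lambda(T,\mathbf{x})}$ and $U^*=e^{\Omega}$, the two minimizer sets $X_U$ and $X_H$ will differ only by whether this cancellation is legitimate.

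I would first settle the direction $X_H\subseteq X_U$, which is essentially immediate. If $\mathbf{x^*}\in X_H$ then $D(\Lambda(T,\mathbf{x^*}),\Omega)=0$, hence $\Lambda(T,\mathbf{x^*})=\Omega$. Applying $\exp$ to both sides yields $U(T,\mathbf{x^*})=e^{\Lambda(T,\mathbf{x^*})}=e^{\Omega}=U^*$, so $D(U(T,\mathbf{x^*}),U^*)=0$. Since $0$ is assumed to be the global optimum of the unitary problem, this forces $\mathbf{x^*}\in X_U$.

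For the reverse inclusion I would start from $\mathbf{\hat{x}}\in X_U$, i.e.\ $e^{\Lambda(T,\mathbf{\hat{x}})}=U^*=e^{\Omega}$, and try to strip the exponentials. The hypothesis $\|\Omega\|_2<\pi$ places $\Omega$ in the injectivity region described above. To place $\Lambda(T,\mathbf{\hat{x}})$ there as well, I would invoke the Magnus convergence statement already recalled in the paper: the assumption $\int_0^T\|A(t,\mathbf{\hat{x}})\|_2\,dt<\pi$ guarantees both that the Magnus series converges and that $\|\Lambda(T,\mathbf{\hat{x}})\|_2<\pi$, so that $\Lambda(T,\mathbf{\hat{x}})$ is the principal logarithm of $U(T,\mathbf{\hat{x}})$. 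Cancelling exponentials then gives $\Lambda(T,\mathbf{\hat{x}})=\Omega$, whence $\mathbf{\hat{x}}\in X_H$.

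The only non-trivial piece is the norm control on $\Lambda(T,\mathbf{\hat{x}})$: without it, two anti-Hermitian matrices differing by $2\pi i$ times a spectral projector both exponentiate to the same unitary, and the implication $e^A=e^B\Rightarrow A=B$ can fail. Taking the standard Magnus convergence theorem cited via \cite{Blanes_2009} as a black box removes this obstacle; the rest of the argument is essentially bookkeeping with the defining property of a metric.
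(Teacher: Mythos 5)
Your proposal is correct and follows essentially the same route as the paper's own proof: both directions proceed by turning the zero-optimum hypothesis into exact equalities, applying $\exp$ for the easy inclusion, and using the Magnus convergence bound $\int_0^T\|A(t,\mathbf{\hat{x}})\|_2\,dt<\pi$ together with $\|\Omega\|_2<\pi$ to identify both generators with the principal logarithm for the converse. Your added remark about why the norm control is indispensable (the $2\pi i$ spectral-projector ambiguity) is a useful clarification but does not change the argument.
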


\begin{proof}
    If $\mathbf{x^{\star}}\in X_H$ then $D(\Lambda(T,\mathbf{x^{\star}}),\Theta) =0$, so $\Lambda(T,\mathbf{x^{\star}})=\Theta$, which implies $D(e^{\Lambda(T,\mathbf{x^{\star}})},e^\Theta)=0$, and therefore $\mathbf{x^{\star}}\in X_U$. On the other hand, if $\mathbf{\hat{x}}\in X_U$ satisfies $ \int _{0}^{T}\| A(t,\mathbf{\hat{x}}) \|_2 \ dt < \pi$ then $\Lambda(T,\mathbf{\hat{x})}$ lies in the principal branch \cite{Moan_2007}. We assumed that the generator $\Theta$ had to be in the principal branch too and, from $D(e^{\Lambda(T,\mathbf{\hat{x}})},e^\Theta)=0$, we see the generators have to be the same. From this we conclude $D(\Lambda(T,\mathbf{\hat{x}}),\Theta) =0$, hence $\mathbf{\hat{x}}\in X_H$.
\end{proof}

This proposition concerns the full Magnus expansion, but in practice we use the truncated series at some order $n$, where $\Lambda_n = \sum_{k=1}^{n} \Omega_k $. Therefore, we can define the remainder between the infinite series and the truncated one, by letting $R_n = \Lambda - \Lambda_n$, and we can consider the following.

\begin{remark}
    If we let $s(\mathbf{x})=\int _{0}^{T}\| A(t,\mathbf{x}) \|_2\ dt < \pi$, then the remainder $\|R_n\|_2$ tends to 0 and we can pick an $n$ such that $\|R_n\|_2 < \pi - s(\mathbf{x})$. It follows from the triangle inequality that $\|\Lambda_n\|_2<\pi$, so the truncated series lies in the principal branch.
\end{remark}

During this work we will assume that the order of the Magnus approximation we pick is large enough to stay in the principal branch. This is generally the case for small systems with weak drives, if we were to use this for bigger systems the order of approximation we take should be considered more carefully.

From this proposition it follows that, under the assumption of reachability of the target unitary and considering the principal branch of the logarithm, we can obtain the global minimizer by considering the anti-Hermitian generators instead of the unitaries, resulting in the optimization problem
\begin{equation}
    \min_{\mathbf{x}} \quad \|\Lambda_n(T,\mathbf{x})- \Theta \|^{2}_F .
    \label{final_problem}
\end{equation}

The function to minimize is a polynomial in variables $\{x_0,x_1,\hdots,x_{m-1}\}$ and of degree $2n$. We can minimize this polynomial globally using the moment-SOS method \cite{lasserre,Parrilo2003-qg,9fef820b69d243f2a501e933b30bd977}. This technique transforms the polynomial optimization problem into a hierarchy of semidefinite programming problems, which are globally solvable, allowing us to find the true solution to the relaxed quantum optimal control problem.

With this method we reduce the order of the polynomial compared to additionally approximating the exponential of the Magnus expansion, from $2np$ to $2n$, where the $p$ is the order of approximation of some method we may use to approximate the exponential function, like Taylor or Chebyshev polynomials. This allows us to either increase the accuracy of the control pulse, increase the truncation order of the Magnus expansion, or increase the order of the Moment-SOS hierarchy. These will lead to better accuracy in the final results and faster convergence to the global solution. We illustrate the time required to both obtain the polynomial, and minimize it, in Fig.~\ref{fig:times} for general QCPOP and for the gate synthesis method.

In Fig.~\ref{fig:times} we consider the Ising model for $N$ qubits, where $H(t)=-J\sum_{i=1}^{N-1}\sigma_i^z\sigma_{i+1}^z+E(t)\sum_{i=1}^{N}\sigma_i^x$, and we compute the time it takes to obtain the polynomial we want to optimize and the time it takes to find the global minimizer. We do this with the QCPOP method, using a 3rd order Magnus expansion and a 4th order Chebyshev approximation (to approximate the exponential), and with gate synthesis via Hermitians, for a 3rd order Magnus expansion. From the results, we conclude that by using only the Magnus expansion we have an exponentially faster time to compute the objective polynomial and a faster computation time to solve the polynomial optimization problem.

\begin{figure}[H]
    \centering
    \includegraphics[width=1\linewidth]{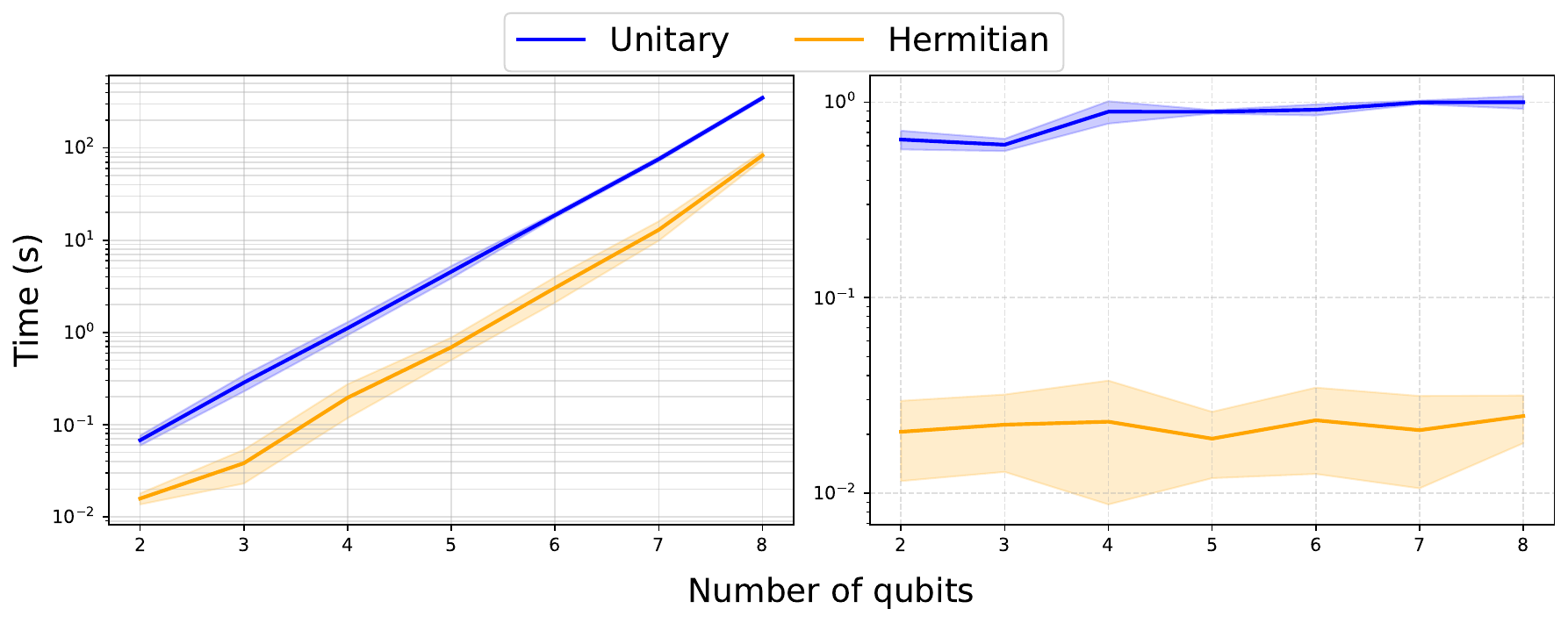}
    \caption{\textbf{Time efficiency.} We study the time to evaluate the objective (left) and time it takes to solve the polynomial optimization problem (right) for both for unitary optimization and the more efficient gate synthesis with Hermitians. Based on 5 repetitions, solid lines present the means and shaded regions present $\pm$ one standard deviation.}
    \label{fig:times}
\end{figure}

Even if our method ensures global optimality, its scalability is constrained both by the size of the physical system and the accuracy of the approximations we make. For a system of $N$ qubits, the Hamiltonian will have a size of $2^N\times 2^N$, making the computations for large systems computationally costly. However, one of the main advantages of this method is that the size of the resulting polynomial to optimize is independent of the system's size. 

Methods like GRAPE \cite{GRAPE} and CRAB \cite{CRAB} navigate an optimization landscape that is shaped by constraints (time, energy), controllability and system symmetries. For unconstrained, controllable and small systems, we generally have trap-free landscapes \cite{russell2016quantumcontrollandscapestrap, Riviello_2015} so methods like GRAPE are quite successful. However, for constrained or many-body systems, where symmetries (translational, rotational) might arise, the presence of local traps or regions with plateaus may make landscape navigation ill-conditioned and sensitive to initialization. With the method we develop in this paper we avoid these issues by approximating the landscape to a polynomial, and finding its global optimum in a single shot.

\begin{figure}[H]
    \centering
    \includegraphics[width=1\linewidth]{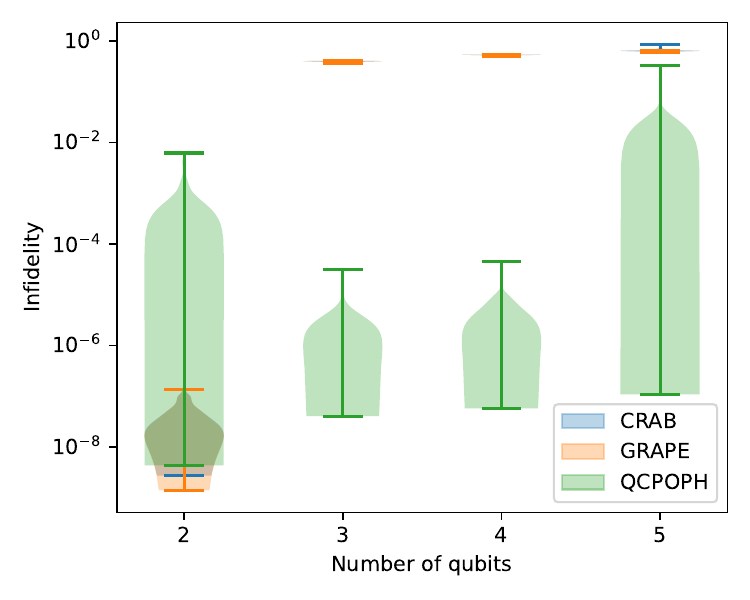}
    \caption{\textbf{Accuracy of quantum optimal control methods with number of qubits.} Infidelity comparison between different quantum optimal control methods for the XX chain with controlled global X field  for different number of qubits, tested with 100 random reachable unitary targets. The color shading is the range of infidelities and the width is the frequency. For 3 and 4 qubits, CRAB and GRAPE almost fully intersect, making them hard to distinguish.}
    \label{fig:qoc_compare}
\end{figure}

The results in Figure \ref{fig:qoc_compare} indicate that going to even larger systems while maintaining accuracy might be possible with the polynomial optimization framework. In \cite{gaggioli2025timeevolutioncontrolledmanybody} an approach using tensor networks together with polynomial optimization is presented to make the method scale well with the system's size. In this work we focus on how to make the polynomial optimization problem more scalable for more complicated control functions (with more variables), and for better approximations of the Magnus expansion.

\section*{Piecewise constant control}

If we have a piecewise constant operator $H(t)$, we can consider time steps of $\Delta t=\frac{T}{m}$, for total time period $T$ divided into $m$ segments, each of constant control. Let $A(t_i)=-i(H_0+x_iH_c)$ for unknown control parameters  $\{x_1,x_2,\hdots,x_{m}\}$, we can formulate the time evolved solution to Equation ~\ref{Schro} as
\begin{equation}
    U(T)=e^{\Delta t A(t_m)}e^{\Delta t A(t_{m-1})}\hdots e^{\Delta t A(t_2)}e^{\Delta t A(t_1)}.
\end{equation}

Motivated by the single exponential expression of the Magnus expansion, and in order to apply Equation ~\ref{final_problem} we also want to express $U(T)=e^\Sigma$ as a single exponential for the piecewise constant case. For this, we have to find an expression for $\Sigma$, we let $A_i\equiv A_i(x_i)$ and then we have 
    \begin{equation}
\begin{aligned}
    \Sigma&=\Delta t \sum_{i=1}^m A_i +\frac{(\Delta t)^2}{2}\sum_{j<i}^m[A_i,A_j]\\
    & \hspace{10pt}+\frac{(\Delta t)^3}{6}\sum_{k\leq j\leq i}^m\bigg([A_i,[A_j,A_{k}]]+[A_k,[A_j,A_i]]\bigg)+\hdots
\end{aligned}\label{BCH_extended}
\end{equation}
which we can see as a discrete Magnus expansion \cite{doikou2024quantumgroupsdiscretemagnus}. More details on this formula can be found in Appendix \ref{ap-BCH}.

Current methods to simulate piecewise constant evolution generally use iterative approaches, taking $m-1$ products of the exponentials. This approach is effective when working with numerical entries, but becomes intractable when the entries are polynomial. The polynomial entries will increase in size very fast and this can make the subsequent optimization protocol not work anymore. We therefore use the gBCHD formula for $\Sigma$, which yields low-order polynomials, facilitating the use of polynomial optimization protocols.

We now formulate the gate synthesis optimization problem as follows.
\begin{equation}
    \min_{\mathbf{x}} \quad \|\Sigma_n(T,\mathbf{x})- \Theta \|^{2}_F,
\end{equation}
where $n$ indicates the order of truncation of the discrete Magnus series.

This will result in a polynomial optimization problem of $m$ variables and order $2n$. We can solve this with the standard polynomial optimization methods outlined previously.

\section*{Example}

Let us look at an example of the scaled versions of the Hamiltonians for IBM
Q 2-qubit devices, we truncate the numerical values to 4 decimals for simplicity,
\begin{equation}
    H_0 =\begin{pmatrix}
        0 & 0 & 0 \\
        0 & 0.5159 & 0 \\
        0 & 0 & 1
    \end{pmatrix} , \quad H_c=\begin{pmatrix}
        0 & 0.7071 & 0 \\
        0.7071 & 0 & 1 \\
        0 & 1 & 0
    \end{pmatrix}.
\end{equation}

We will use a quadratic polynomial ($m=3$) as a control
and a third-order Magnus expansion ($n=3$). We want to solve the following gate synthesis optimization problem \ref{final_problem}. The details of the numerical implementation can be found in Appendix \ref{ap-numerics}.

\begin{figure}[H]
    \centering
    \includegraphics[width=1\linewidth]{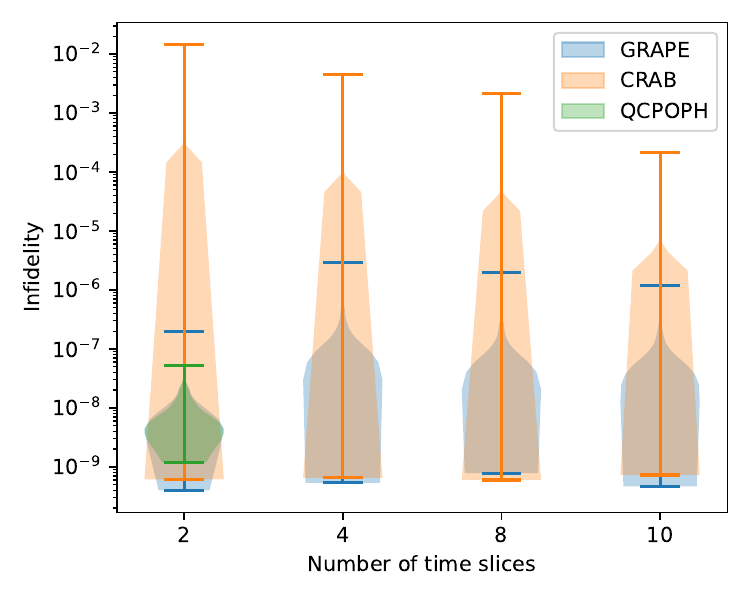}
    \caption{\textbf{Accuracy of quantum optimal control methods.} Fidelity comparison between GRAPE, CRAB and the gate synthesis with Hermitians. We test the fidelity for a sample of 1000 random values of the control function, using an order 3 Magnus expansion.}
    \label{fig:grape}
\end{figure}

In Fig. ~\ref{fig:grape} we take the integrals of the control function in the Magnus expansion, so we only compare the results with the samples of one time division. The infidelity error range for our method lies between $10^{-7}$ and $10^{-9}$, with an average infidelity similar to that of GRAPE. We improve significantly with respect to CRAB. The difference of performance between GRAPE and CRAB comes from the different parametrization of the control they use. CRAB optimizes in a truncated random Fourier basis which, in some cases, might be misaligned with the true reachable set, producing basis-induced local traps or plateaus. On the other hand, GRAPE parametrizes the full control function by discretizing it, allowing us to find a good local optimum. 

We repeat the numerical experiment for a piecewise constant control, in order to also compare this with GRAPE and CRAB. We plot the results in Fig. ~\ref{fig:piece}. Our method maintains an accuracy between $10^{-6}$ and $10^{-10}$, improving with the increasing number of slices. In this case, the more intervals we use, the more variables the final polynomial will have, and therefore the more difficult it will be to solve the polynomial optimization problem.

\begin{figure}[H]
    \centering
    \includegraphics[width=1\linewidth]{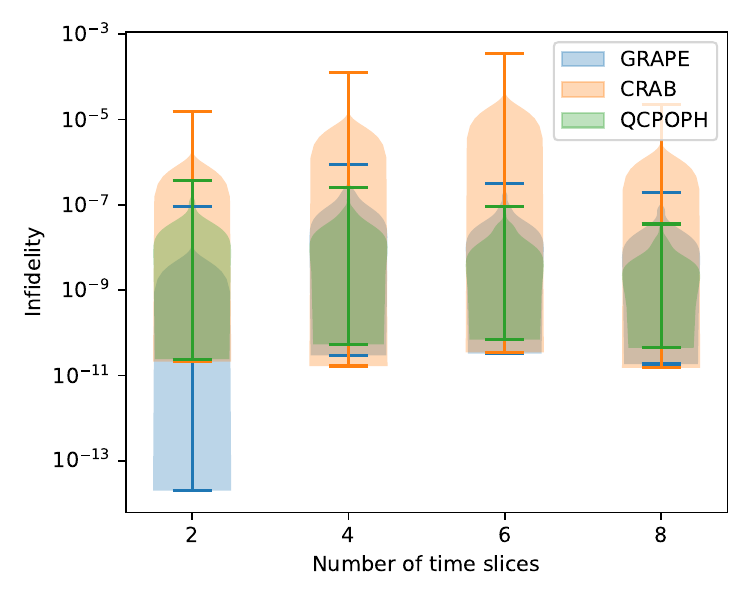}
    \caption{\textbf{Accuracy of quantum optimal control methods.} Fidelity comparison between GRAPE, CRAB and the gate synthesis with Hermitians for a piecewise constant control. We test the fidelity for a sample of 100 random values of the control function, using an order 4 discrete Magnus expansion.}
    \label{fig:piece}
\end{figure}

\section*{Conclusion}

Here we present a method to synthesise unitaries by formulating the quantum optimal control problem as a polynomial optimization problem. We exploit the structure of unitary matrices to consider only its Hermitian generators, and this leads to a major improvement in computational efficiency while maintaining a very high accuracy. By using less time to find the solution and by the nature of the polynomial optimization solver, moment-SOS, that guarantees the solution to be global, we can reduce the time required to implement a given unitary compared to local quantum optimal control protocols that need to be run many times. This, in turn, can improve the time required to implement quantum circuits with many gates significantly.

\section*{Acknowledgments}
L.B.G has been supported by European Union’s HORIZON–MSCA-2023-DN-JD programme under under the Horizon Europe (HORIZON) Marie Skłodowska-Curie Actions, grant agreement 101120296 (TENORS). D.I.B. has been supported by the Army Research Office (ARO) (grant W911NF-23-1-0288, program manager Dr.~James Joseph). The views and conclusions contained in this document are those of the authors and should not be interpreted as representing the official policies, either expressed or implied, of ARO or the U.S. Government. The U.S. Government is authorized to reproduce and distribute reprints for Government purposes notwithstanding any copyright notation herein. J.M. has been supported by the Czech Science Foundation (23-07947S).

\section*{Code availability}
All codes used in this work can be found in \textit{https://github.com/llorebaga/UGS}

\bibliography{main}

\appendix
\section{Generalized BCH formula}\label{ap-BCH}
Recall that the product of exponential functions for the scalars $a,b\in\mathbb{C}$ is $e^a e^b=e^c$, with $c=a+b$. However, when working with matrices we need a different formula to account for the non-commutativity of the objects. For matrices $A,B$ we have
\begin{equation}
    e^A e^B = e^C
\end{equation}
and the Baker-Campbell-Hausdorff (BCH) formula \cite{BCH} gives an expression for $C$:
\begin{equation}
    C=A+B+\frac{1}{2}[A,B] + \frac{1}{12}([A,[A,B]]+[B,[B,A]])+\hdots
    \label{BCH}
\end{equation}

We now want to consider an extended BCH formula that applies for multiple exponentials of matrices. For this we use the generalized Baker–Campbell–Hausdorff–Dynkin (gBCHD) formula \cite{SAENZ2002357}.

We let $A_i, i=1,\hdots,m$ be $n\times n$ matrices. If we have 
    \begin{center}
        $e^{A_m}e^{A_{m-1}}\hdots e^{A_{2}}e^{A_{1}}=e^\Sigma$
    \end{center}
    then $\Sigma$ is given by 
    \begin{equation}
\begin{aligned}
    \Sigma&=\sum_{i=1}^m A_i +\frac{1}{2}\sum_{j<i}^m[A_i,A_j]\\
    & \hspace{10pt}+\frac{1}{6}\sum_{k\leq j\leq i}^m\bigg([A_i,[A_j,A_{k}]]+[A_k,[A_j,A_i]]\bigg)+\hdots
\end{aligned}
\end{equation}

This is clearly an equivalent expression to the Magnus expansion for the discrete case, so we can think of it as the discrete Magnus expansion \cite{doikou2024quantumgroupsdiscretemagnus}.

\section{Numerical implementation}\label{ap-numerics}

To find the optimal controls of our problem, we have to compute the Magnus expansion, which can be found by computing the integral equations ~\ref{eq:Omega1} and ~\ref{eq:Omega3}, and results in a matrix with polynomial entries. What remains is to consider a unitary target and take the logarithm in the principal branch. This will then result in the polynomial to optimize.

We assume that the target unitary is reachable with our Hamiltonian. We impose this by selecting random control parameters $\mathbf{x^{\star}}$ from a
uniform distribution on the interval $[-1, +1]$, leading to a random Hamiltonian. Then we solve Equation ~\ref{Schro}, for an evolution time $T=0.5$, to obtain the target unitary $U^{\star}$. Note that we restrict the values of the control and the time interval in order to make sure that the generating anti-Hermitian satisfies $\|\Theta\|_2 <\pi$, and therefore we can make use of Proposition \ref{prop}. The objective is to recover these random control parameters, or those that also generate $U^{\star}$ with minimal error, considering that the solution might not be unique. What remains is to compute the logarithm to find the polynomial we want to optimize. 

Then, we use the Julia package \textit{TSSOS} \cite{TSSOS} to solve the polynomial optimization problem globally. We repeat this process for 1000 different random controls and target unitaries, and calculate the infidelities. We plot the results in Fig. ~\ref{fig:grape}.

We use the Python library \textit{QuTiP} \cite{qutip} to benchmark the polynomial optimization method, we call QCPOPH, with the popular methods GRAPE and CRAB, which are based on time discretisation and therefore we compute them for different number of time slices, and using a random function as the initial guess.

\end{document}